\title{\LARGE \bf
A Data-Driven Safety Preserving Control Architecture for Constrained Cyber-Physical Systems 
}
\author{Mehran Attar and Walter Lucia
	\thanks{This work was supported in part by the Natural Sciences and Engineering Research Council of Canada (NSERC).}%
	\thanks{Mehran Attar and Walter Lucia are with the Concordia Institute for Information Systems Engineering (CIISE), Concordia University, Montreal, QC, H3G 1M8, CANADA, {\tt\small mehran.attar@concordia.ca}, {\tt\small walter.lucia@concordia.ca}}%
}
\theoremstyle{plain}
\newcommand{\rr}{\mathop{{\rm I}\mskip-4.0mu{\rm R}}\nolimits}
\theoremstyle{definition}
\newtheorem{lemma}{Lemma}
\newtheorem{proposition}{Proposition}
\theoremstyle{remark}
\newtheorem{remark}{\textbf{Remark}}
\theoremstyle{remark}
\newtheorem{definition}{Definition}
\newtheorem{property}{\textbf{Property}}
\begin{document}

\maketitle 
\thispagestyle{empty}
\pagestyle{empty}
\begin{abstract}
In this paper, we propose a data-driven networked control architecture for unknown and constrained cyber-physical systems capable of detecting networked false-data-injection attacks and ensuring plant's safety. 
In particular, on the controller's side, we design a novel robust anomaly detector that can discover the presence of network attacks using a data-driven outer approximation of the expected robust one-step reachable set. On the other hand, on the plant's side, we design a data-driven safety verification module, which resorts to worst-case arguments to determine if the received control input is safe for the plant's evolution. Whenever necessary, the same module is in charge of replacing the networked controller with a local data-driven set-theoretic model predictive controller, whose objective is to keep the plant's trajectory in a pre-established safe configuration until an attack-free condition is recovered. Numerical simulations involving a two-tank water system illustrate the features and capabilities of the proposed control architecture.
\end{abstract}
\section{Introduction}\label{sec:introduction}

The term Cyber-Physical Systems (CPSs) is used to denote physical systems equipped with communication and computation/control capabilities. Well-known examples of CPSs are smart grids, water distribution systems, and autonomous transportation systems 
\cite{dibaji2019systems}.
CPSs have the great potential to improve the efficiency and robustness of traditional engineering systems. However, the increased capabilities come with unavoidable concerns about their vulnerability to cyber-attacks.
The vulnerability of CPSs to cyber-attacks has motivated engineers to design detection mechanisms to detect the presence of attacks and develop control architecture to preserve the plant's safety. Different passive and active solutions have been proposed in the literature to detect the presence of cyber attacks. In \cite{ghaderi2020blended}, an active detection mechanism has been proposed based on moving target and watermarking ideas to detect stealthy attacks. In \cite{miao2016coding}, a sensor coding method has been proposed to detect stealthy attacks. In \cite{attaractive}, authors have proposed an active detection mechanism by leveraging dimensionality reduction, optimal reconstruction, and a time-varying encoding mechanism to detect stealthy replay and covert attacks. 

As far as attack countermeasures or safe control strategies against networked attacks are concerned, there are a few solutions that have been proposed to overcome the limitations of traditional fault-tolerant control schemes, see, e.g., \cite{franze2023cyber, sun2019resilient, franze2023output} and references therein. Moreover, most of the solutions are developed for unconstrained control systems. One of the first attempts to ensure the safety of constrained CPSs against attacks can be found in \cite{lucia2022supervisor}, where robust set-theoretic arguments are used to design a safe networked control scheme under the assumption that an attack-free scenario can be recovered in a-priori known number of steps. On the other hand, in \cite{gheitasi2022worst}, a safety-preserving architecture is derived by resorting to set-theoretic arguments and a local emergency controller.  
%
\subsection{Contribution}\label{sec:contribution}

When constrained CPSs are of interest, existing detection and mitigation strategies are developed assuming that an accurate knowledge of the plant's dynamic model is available. 
Consequently, recent progress made in the field of data-driven control, see, e.g.,  \cite{hou2013model,de2019formulas,robustallgower2023,Berberichdata2020}, have not been fully exploited to design data-driven anomaly detectors or resilient control strategies for constrained CPSs. Moreover, existing data-driven solutions, see, e.g., \cite{datazhao2023} and references therein, address CPS security issues only for unconstrained systems. 
The solution proposed in this paper goes in the direction of filling the identified gap. In particular, it proposes a novel data-driven implementation of the model-based solution presented in \cite{gheitasi2021safety} for constrained CPSs. The here developed solution leverages the data-driven method developed in \cite{alanwar2021data}, which is capable of computing zonotopic outer approximations of forward reachable sets to design a novel and robust data-driven passive anomaly detector local to the controller. Moreover, the data-driven set-theoretic predictive controller developed in \cite{attar2023data} is customized to design an emergency controller local to the plant, which is activated whenever the received control input is deemed unsafe according to the expected one-step plant evolution. 
%
\section{preliminaries and problem formulation}\label{section:preliminaries_and_problem_formulation}

\definition \label{def:polytope}
Given $q$ halfspaces, a  polytope $\mathcal{P}$ is defined as (using the $\mathcal{H}$-representation)
\begin{equation}\label{eq:polytope_def}
    \mathcal{P} = \left\{ x \in \rr^n | Cx \leq d, C\in \rr^{q \times n},d\in \rr^{q \times 1}   \right \}   
\end{equation}
\definition \label{def:zonotope}
Given a center vector $c \in \rr^n$ and $p \in \mathbb{N}$ generator vectors $g^{(i)}\in \rr^n$  collected in a matrix 
$G= \left[g^{(1)}\, \ldots,\, g^{(p)}\right]  \in \rr^{n \times p},$ referred to as the  generator matrix. Then, a zonotope is defined as (using the $\mathcal{G}-$representation)
\begin{equation}\label{eq:zonotope_def}
   \!\! \mathcal{Z}(c, G)\! =\! \! \left\{\! x\in\rr^n\!: x\! =\! c\! +\! \sum_{i=1}^{p} \beta^{(i)}g^{(i)}, -1\!\leq\! \beta^{(i)}\! \leq \! 1\! \right\}
\end{equation}
\definition \label{def:matrix_zonotope} 
Given a center matrix $C \in \rr^{n \times p}$ and $q\in \mathbb{N}$ generator matrices $G_M^{(i)}\in \rr^{n\times p}$  collected in a matrix  $G_{M} = \left[G^{(1)}_{M},\, \ldots,\, G^{(q)}_{M}\right]
\in \rr^{n \times (pq)}$. Then, a matrix zonotope is 
\begin{equation}\label{eq:matrix_zonotope}
	\begin{array}{rc}
		 \mathcal{M}(C, G_{M}) \!= \!&\displaystyle \!\!\!\!\! \{X \!\! \in \!\rr^{n \times p}\!:\!X\! =\! C +\! \sum_{i=1}^{q} \beta^{(i)}G^{(i)}_{M}, \vspace{-0.1cm}\\
		 & -1\leq \beta^{(i)} \leq 1 \}
	\end{array}   
\end{equation}
\definition \label{def:matrix_polytope} 
Consider a set of $n_v>0$ vertex matrices $\mathcal{V}_P=\{V_P^{(i)}\}_{i=1}^{n_v},$ $V_P^{(i)}\in \rr^{n\times p}.$ A matrix polytope 
$\displaystyle\mathcal{M}_P(\mathcal{V}_P)\!=\!
  \{M\in \rr^{n\times p}: M\!=\!\sum_{i=1}^{n_v}\rho_iV_P^{(i)},
  \,0\leq \rho_i\leq 1,\,  \sum_{i=1}^{n_v}\rho_i=1\}.$

\begin{definition}
Let $\mathcal{M}_1,\mathcal{M}_2$ be two matrix sets (e.g., matrix zonotopes) and $\mathcal{Z}$ a vector set (e.g., a zonotope). Then, 
\begin{equation}
\begin{array}{c}
\mathcal{M}_1\mathcal{M}_2=\{M_1M_2: M_1\in \mathcal{M}_1,\,M_2\in \mathcal{M}_2\}
\\
\mathcal{M}_1\mathcal{Z}=\{M_1z: M_1\in \mathcal{M}_1,\,z\in \mathcal{Z}\}
\end{array}
\end{equation}
\end{definition}

Consider the class of systems described by a linear time-invariant (LTI) model subject to bounded but unknown disturbance. By denoting with  $k\in \mathbb{Z}_+ = \{0, 1, ...\}$ the discrete-time index, and with $x_k\in \rr^n,$  $u_k \in \rr^m,$ and $w_k \in \rr^n$  the state, control, and disturbance vectors, respectively, the discrete-time LTI evolution is
\begin{equation}\label{eq:linear_system}
    x_{k+1} = Ax_k + Bu_k + w_k
\end{equation}
where  $A, B$ are the system matrices of appropriate dimensions and $w_k$ is bounded in a compact set $\mathcal{W}\subset\rr^n.$  Assume that for physical limitations and safety reasons, the following set-membership constraints must be fulfilled:
\begin{equation}
x_k \in \mathcal{X}\subset \rr^n,\quad u_k \in \mathcal{U}\subset \rr^m,\quad  \label{eq:constraints}
\end{equation}
with $\mathcal{X}\subset \rr^n,\,\mathcal{U}\subset \rr^m, \mathcal{W}$ zonotopes containing the origin and described using the  $\mathcal{G}-$ or $\mathcal{H}-$ representation:
\begin{equation}\label{eq:constraints_h_representations}
\begin{array}{c}
\mathcal{X} = \mathcal{Z}_x(c_x,G_x) =\left\{x\! \in\! \rr^n\!:\! H_xx \leq h_x \right\}
\\
\mathcal{U} = \mathcal{Z}_u(c_u,G_u) =\{u\!\in\! \rr^m\!:\! H_uu\leq h_u\} \\
\end{array}
\end{equation}
\begin{equation}\label{eq:disturbance_zonotope}
 \mathcal{W}=\mathcal{Z}_w(c_w,G_w)  
 =
 \{w\in \rr^n: H_w w\leq h_w\}
\end{equation}
where $c_x,c_w\in \rr^n,c_u\in \rr^m,$ $G_x\in \rr^{n\times p_n},G_u\in \rr^{m\times p_m}, G_w\in \rr^{n\times p_w},$ $H_x\in \rr^{n_x\times n}, H_u\in \rr^{n_u\times m}, H_w\in \rr^{n_w\times n},$ $h_x\in \rr^{n_x}, h_u\in \rr^{n_u}, h_w\in \rr^{n_w}.$

\definition \label{def:mikowski_sum_and_difference}
Given two sets $\mathcal{S}_1$ and $\mathcal{S}_2$, the Minkowski/Pontryagin set sum (denoted as $\oplus$) and difference (denoted as $\ominus$) between $\mathcal{S}_1$ and $\mathcal{S}_2$ are: 
\begin{equation}\label{eq:def_for_mink_sum_and_diff}
\begin{array}{rcl}
\mathcal{S}_1 \oplus \mathcal{S}_2 &=& \{s_1 + s_2: s_1 \in \mathcal{S}_1 , s_2 \in \mathcal{S}_2 \}\\
\mathcal{S}_1 \ominus \mathcal{S}_2 &=& \{s_1 \in \rr^s: s_1 + s_2 \in \mathcal{S}_1, \forall s_2 \in \mathcal{S}_2 \}
\end{array}
\end{equation}
\definition \label{def:RCI_set}
A set $\mathcal{T}^0 \subseteq \mathcal{X}$ is called Robust Control Invariant (RCI)  for \eqref{eq:linear_system}-\eqref{eq:disturbance_zonotope} if $\forall x \in \mathcal{T}^0, \exists u \in \mathcal{U}: Ax + Bu + w \in \mathcal{T}^0, \forall w \in \mathcal{W}.$
\definition \label{def:model_based_controllable_sets}
Consider \eqref{eq:linear_system}-\eqref{eq:disturbance_zonotope} and a target set $\mathcal{T}^{j} \subseteq \mathcal{X}.$ The set of states Robust One-Step Controllable (ROSC) to $\mathcal{T}^{j}$ is
\begin{equation}\label{eq:ROSC-set}
\!\mathcal{T}^{j+1}\! =\! \{x\! \in\! \mathcal{X}: \exists u \in \mathcal{U}:\!\! Ax\! + Bu \!+ w \in \mathcal{T}^{j}, \forall w \in \! \mathcal{W} \}
\end{equation}

\definition \label{def:safety_def} The constrained system \eqref{eq:linear_system}-\eqref{eq:disturbance_zonotope} is said \textit{safe} if constraints \eqref{eq:constraints} are fulfilled $\forall\,k \geq 0$. 
\subsection{Networked setup and problem formulation}\label{sec:system_under_attack}

Of interest are networked control system setups
where the plant and the tracking controller are spatially distributed, and the communication channel is subject to additive False Data Injection (FDI) attacks. Consequently, the closed-loop evolution of \eqref{eq:linear_system} is 
\begin{equation}\label{eq:system_under_attack}
       x_{k+1} = Ax_k + Bu^{'}_k + w_k, \qquad 
       u_k = \eta(x^{'}_k,r_k)
\end{equation}
where $r_k \in \rr^r$ is the reference signal and 
$\eta(\cdot,\cdot): \rr^n \times \rr^r \rightarrow \rr^m$ is the networked tracking controller logic. Moreover, $u^{'}_k:=u_k + u^a_k, x^{'}_k:=x_k + x^a_k$, with $u^a_k\in \rr^m$ and $x^a_k\in \rr^n$ the vectors injected by the attacker.

\assumption \label{assumption:control_center} In the absence of attacks, the networked tracking controller $u_k = \eta(x^{'}_k,r_k)$ ensures that the plant's constraints \eqref{eq:constraints} are fulfilled regardless of any admissible disturbance \eqref{eq:disturbance_zonotope} realization. The controller's working region, also known as controller Domain of Attraction (DoA), is hereafter denoted with $\mathcal{X}_{\eta} \subseteq\mathcal{X}.$
\assumption \label{assumption:problem_statement} 
The  matrices $A,$ $B$ of \eqref{eq:linear_system} are unknown.
Moreover, a collection of $N_t>0$ input-state trajectories is available,
\begin{equation}\label{eq:available_trajectories}
	\left\{\left\{u^{(i)}_k\right\}^{N_{s}^{(i)}-1}_{k=0}\!\!\!\!,\,  \left\{x^{(i)}_k\right\}^{N_{s}^{(i)}-1}_{k=0}\right\}_{i=1}^{N_t},
\end{equation}
where $N_{s}^{(i)}>0$ is the number of samples in each trajectory. 
Moreover, the matrix $\begin{bmatrix}
X_-^T & U_-^T
\end{bmatrix}^T$ has full row rank, i.e., 
\begin{eqnarray}
&& \text{rank}(\begin{bmatrix}
X_-^T & U_-^T
\end{bmatrix}^T)=n+m \label{eq:rank_condition}\\
\!\!\!X_{-}\!\!\!&\!\!\!=\!\!&\!\!\! \left[
    x^{(1)}_0, \cdots, x^{(1)}_{N_{s}^{(1)}-1}, \cdots, x^{(N_t)}_0, \cdots, x^{(N_t)}_{N_{s}^{(N_t)}-1}\right] \\
\!\!\!U_{-}\!\!\!&\!\!\!=\!\!&\!\!\! \left [
    u^{(1)}_0, \cdots, u^{(1)}_{N_{s}^{(1)} -1}, \cdots, u^{(N_t)}_0, \cdots, u^{(N_t)}_{N_{s}^{(N_t)} -1}
    \right ] 
\end{eqnarray}
and $X\in \rr^{n \times (N_s+1)N_t},$ $X_{-}\in \rr^{n \times N_sN_t},$  and  $U_{-}\in \rr^{m \times N_sN_t}.$
\remark \label{remark:data_driven_representation}The condition~\eqref{eq:rank_condition} is standard in the related literature \cite{de2019formulas}, and it ensures that the collected data \eqref{eq:available_trajectories} have been obtained for sufficiently persistent exciting input sequences and that $\begin{bmatrix}
    A & B
    \end{bmatrix} =  X_+ \begin{bmatrix}
    X_- \\ U_-
    \end{bmatrix}^{\dagger}$, with 
\begin{equation}\label{eq:compute_AB_without_noise}
\begin{array}{c}
    X_{+} =\left[
    x^{(1)}_1, \cdots, x^{(1)}_{N_{s}^{(1)}}, \cdots, x^{(N_t)}_1, \cdots, x^{(N_t)}_{N_{s}^{(N_t)}}\right ]
\end{array}
\end{equation}
and $\dagger$ denotes the right pseudo inverse operator. \hfill $\Box$

\noindent \textbf{Problem of interest:} {\it 
Under the Assumptions~\ref{assumption:control_center}-\ref{assumption:problem_statement}, design a data-driven networked control architecture such that 
\begin{itemize}
    \item \textit{(O1)} - FDI occurrences are detected before they could lead to a risk for the plant's safety (see Definition~\ref{def:safety_def}).
    \item \textit{(O2)} - The plant's safety under any FDI attack occurrence is preserved, and tracking performance recovered in the post-attack phase is ensured.
\end{itemize}
}
%
\section{Proposed Data-Driven Control Architecture}\label{sec:proposed_control_architecture}

\begin{figure}[h!]
    \centering
    \includegraphics[width=0.8\linewidth]{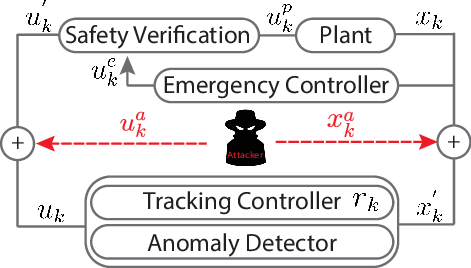}
    \caption{Proposed control architecture}
    \label{fig:proposed_control_architecture}
\end{figure}

To provide a solution to \textit{(O1)}-\textit{(O2)}, the networked data-driven control architecture shown in  Fig.~\ref{fig:proposed_control_architecture} is hereafter developed, where\\
    \noindent - the \textit{Anomaly Detector} module is a passive device, local to the tracking controller, in charge of detecting anomalies caused by FDI attacks\\
    \noindent - the \textit{Safety Verification} module is local to the plant, and it performs safety verification checks on the plant's safety. According to the outcome of these tests, it decides to either use the control signal received by the networked controller or the one provided by the local emergency controller.\\
    \noindent - the \textit{Emergency Controller} module is local to the plant, and it aims to confine the state trajectory in a neighbor of an offline defined safe equilibrium point.
\begin{remark}
In an attack-free scenario, the emergency controller cannot replace the networked controller because it is unaware of  $r_k.$ Its activation temporarily pauses the tracking task to preserve safety. \hfill $\Box$
\end{remark}
%
\subsection{Anomaly detector module} \label{sec:anomaly_detector}

If the model of the plant is available, a robust model-based passive anomaly detector can be designed using robust reachability arguments. In particular, if the model \eqref{eq:linear_system}-\eqref{eq:disturbance_zonotope} is available, then a binary anomaly detector can exploit the knowledge of the robust one-step reachable set starting from the current state $x_k$ under $u_k$ for any admissible disturbance realization. Consequently, the detector can claim an anomaly/attack when
\begin{equation}
 x'_{k+1} \not\in \mathcal{R}^+_k,\quad \mathcal{R}^+_k= Ax'_k \oplus Bu_k \oplus \mathcal{W} 
\end{equation}
However, in the considered data-driven setup, the set $\mathcal{R}^+_k$ cannot be directly computed, but it needs to be over-estimated from the available set of data \eqref{eq:available_trajectories}. The data-driven computation of an outer approximation of $\mathcal{R}^+_k$ can be obtained by adapting the approach developed in \cite{koch2021provably, alanwar2021data}. In particular, first, Lemma~\ref{lemma:noise_zonotope} describes the set of linear models $[\hat{A},\, \hat{B}]$ that are consistent with the data \eqref{eq:available_trajectories} and disturbance set $\mathcal{Z}_{w}.$ Then, Lemma~\ref{lemma:over_approx_forward_sets}, exploits the results of Lemma~\ref{lemma:noise_zonotope} to build an outer approximation, namely $\hat{\mathcal{R}}_k^+$, of $\mathcal{R}^+_k$.

\begin{lemma}{}\label{lemma:noise_zonotope} \it \cite[Lemma 1]{alanwar2021data}
Let $T =\displaystyle \sum_{i=1}^{N_t} N^{(i)}_s$ and consider the following concatenation of multiple noise zonotopes 
$$\mathcal{M}_w=\mathcal{M}_w(C_w, [G^{(1)}_{{M}_w}, \ldots, G^{(qT)}_{{M}_w}]),
$$ 
where
$C_w\in \rr^{n\times (n+m)}=[c_{w},\ldots,\,c_w]$, and 
$G_{M_w}\in \rr^{n\times T(n+m)}$ is built $\forall\,i \in \{ 1, \ldots, q\},\, \forall\, j \in \{2, \ldots, T-1\}$ as 
\begin{equation}
    \begin{array}{rcl}
        G^{(1+(i-1)T)}_{{M}_w} &=& \begin{bmatrix}
                                 g^{(i)}_{w} & 0_{n \times (T-1)}
                                \end{bmatrix}
        \\
        G^{(j+(i-1)T)}_{{M}_w} &=& \begin{bmatrix}
                                0_{n \times (j-1)} & g^{(i)}_{w} & 0_{n \times (T-j)}
                                \end{bmatrix}
        \\
        G^{(T+(i-1)T)}_{{M}_w} &=& \begin{bmatrix}
                                0_{n \times (T-1)} & g^{(i)}_{w}
                                \end{bmatrix}
    \end{array}
\end{equation}
Then, the matrix zonotope
\begin{equation}\label{eq:compute_Mzono_AB}
\begin{array}{rcl}
   
\mathcal{M}_{{A} {B}}\!\!\!\!\!\!&=&\!\!\!\!\!\! (X_+ - \mathcal{M}_w) \begin{bmatrix}
    X_- \\ U_-
    \end{bmatrix}^{\dagger}   
    \\
     \!\!\!& := &\!\!\!\!\!\! \{[\hat{A},\, \hat{B}]: 
     [\hat{A},\, \hat{B}]\! =\! C_{AB} + \displaystyle\sum_{i=1}^{T} \beta^{(i)}G^{(i)}_{M_{AB}},\\
    &&
         -1\leq \beta^{(i)} \leq 1 \}
\end{array}
\end{equation}
where
$$
\begin{array}{c}
C_{AB} = (X_{+}-C_{w})\left(
    [X^T_{-},\, U^T_{-}]    ^{T}\right)^{\dagger}\\
    G_{M_{AB}} =\left[
        G^{(1)}_{M_w}
    \left(
    [X^T_{-},\, U^T_{-}]    ^{T}\right)^{\dagger}\!\!, \ldots, G^{(qT)}_{M_w}\left(
    [X^T_{-},\, U^T_{-}]    ^{T}\right)^{\dagger}
    \right]
\end{array}
$$
contains the set of all system matrices $[\hat{A},\, \hat{B}]$ consistent with \eqref{eq:available_trajectories} and $\mathcal{Z}_{w}$ and such that $[A,B]\in \mathcal{M}_{AB}.$ $\hfill\square$
\end{lemma}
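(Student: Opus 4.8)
The plan is to reduce the claim to the stacked data equation and the full-row-rank hypothesis \eqref{eq:rank_condition}. Since every recorded transition obeys $x^{(i)}_{k+1}=Ax^{(i)}_k+Bu^{(i)}_k+w^{(i)}_k$ for some admissible $w^{(i)}_k\in\mathcal{Z}_w$, concatenating all $T=\sum_{i}N^{(i)}_s$ transitions columnwise gives
$$ X_+ = \begin{bmatrix} A & B\end{bmatrix} D + W_-,\qquad D:=\begin{bmatrix} X_- \\ U_-\end{bmatrix}, $$
where $W_-\in\rr^{n\times T}$ is the matrix whose columns are the disturbance realizations of the recorded transitions. The first --- and only genuinely delicate --- step is to show that the admissible set of such $W_-$, i.e., all $n\times T$ matrices each of whose columns lies in $\mathcal{Z}_w$, is exactly the matrix zonotope $\mathcal{M}_w$ built in the statement. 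This is a bookkeeping check on the block construction of $G_{M_w}$: the generator $G^{(j+(i-1)T)}_{M_w}$ equals $g^{(i)}_w$ placed in column $j$ with zeros elsewhere, so, using independent symmetric-interval coefficients, the $j$-th column of a generic element of $\mathcal{M}_w$ is $c_w+\sum_{i=1}^{q}\beta^{(j+(i-1)T)}g^{(i)}_w$, which sweeps $\mathcal{Z}_w$ independently of the remaining columns; hence $\mathcal{M}_w$ is precisely the set of matrices obtained by picking each column independently from $\mathcal{Z}_w$, i.e., the admissible noise set, and in particular $W_-\in\mathcal{M}_w$.

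Next I would use \eqref{eq:rank_condition}: $D$ has full row rank $n+m$, so its right pseudoinverse satisfies $DD^{\dagger}=I_{n+m}$. Rewriting the stacked equation as $\begin{bmatrix}A&B\end{bmatrix}D=X_+-W_-$ and right-multiplying by $D^{\dagger}$ gives $\begin{bmatrix}A&B\end{bmatrix}=(X_+-W_-)D^{\dagger}$. Since $W_-\in\mathcal{M}_w$, the right-hand side lies in the image of $\mathcal{M}_w$ under the affine map $M\mapsto(X_+-M)D^{\dagger}$, which is by definition $\mathcal{M}_{AB}$; hence $\begin{bmatrix}A&B\end{bmatrix}\in\mathcal{M}_{AB}$. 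The same argument applied to any pair $\begin{bmatrix}\hat A&\hat B\end{bmatrix}$ that is \emph{consistent} with the data and $\mathcal{Z}_w$ --- i.e., $X_+=\begin{bmatrix}\hat A&\hat B\end{bmatrix}D+W$ for some $W$ with columns in $\mathcal{Z}_w$, hence $W\in\mathcal{M}_w$ --- yields $\begin{bmatrix}\hat A&\hat B\end{bmatrix}=(X_+-W)D^{\dagger}\in\mathcal{M}_{AB}$, so $\mathcal{M}_{AB}$ contains every model consistent with the data and $\mathcal{Z}_w$.

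Finally, I would make the $\mathcal{G}$-representation in \eqref{eq:compute_Mzono_AB} explicit by propagating the affine map $M\mapsto(X_+-M)D^{\dagger}$ through the matrix-zonotope representation of $\mathcal{M}_w$: the center $C_w$ maps to $C_{AB}=(X_+-C_w)D^{\dagger}$ (the additive $X_+D^\dagger$ term becomes the new center), each matrix generator $G^{(i)}_{M_w}$ maps to $G^{(i)}_{M_{AB}}=G^{(i)}_{M_w}D^{\dagger}$ (the overall minus sign is immaterial since the coefficients $\beta^{(i)}\in[-1,1]$ are symmetric), and the interval coefficients are carried over unchanged. This last step is pure linearity plus a translation, so the whole proof rests on the columnwise identification of $\mathcal{M}_w$ and on $DD^{\dagger}=I_{n+m}$; this reproduces \cite[Lemma 1]{alanwar2021data}.
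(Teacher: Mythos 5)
Your proof is correct; note that the paper itself offers no proof of this lemma---it is imported verbatim from \cite[Lemma 1]{alanwar2021data}---and your argument is essentially the original one: stack the $T$ transitions as $X_+=[A\ B]D+W_-$, identify $\mathcal{M}_w$ as the set of matrices whose columns range independently over $\mathcal{Z}_w$, use the rank condition \eqref{eq:rank_condition} so that $DD^{\dagger}=I_{n+m}$ to get $[\hat A\ \hat B]=(X_+-W)D^{\dagger}$ for every consistent pair (hence also for the true $[A,B]$), and push the center and generators through the affine map. The only thing you did beyond the cited argument is silently correct the dimension bookkeeping in the statement (e.g., $C_w=[c_w,\ldots,c_w]$ must have $T$ columns, not $n+m$), which is the intended reading.
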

\begin{lemma} (Adapted from \cite[Theorem 1]{alanwar2021data})\label{lemma:over_approx_forward_sets}
\it 
   The set $\hat{\mathcal{R}}^+_k\subset \rr^n,$ computed as 
    \begin{equation} \label{eq:over_approximation_of_one_step_evolution}
      \hat{\mathcal{R}}^+_k = \mathcal{M}_{AB}[x'_k,u_k]^T \oplus \mathcal{W}  
    \end{equation}
    is an outer approximation of $\mathcal{R}^+_k$. 
\end{lemma}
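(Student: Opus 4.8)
The plan is to prove the set inclusion $\mathcal{R}^+_k\subseteq\hat{\mathcal{R}}^+_k$, which is precisely the assertion that $\hat{\mathcal{R}}^+_k$ is an outer approximation of the true robust one-step reachable set $\mathcal{R}^+_k = Ax^{'}_k\oplus Bu_k\oplus\mathcal{W}$. The argument rests entirely on the property, established in Lemma~\ref{lemma:noise_zonotope}, that the unknown true pair $[A,\,B]$ is one of the models contained in the matrix zonotope $\mathcal{M}_{AB}$.

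First, I would take an arbitrary $\xi\in\mathcal{R}^+_k$. By the definition of $\mathcal{R}^+_k$ and of the Minkowski sum~\eqref{eq:def_for_mink_sum_and_diff}, there exists $w\in\mathcal{W}$ such that $\xi = Ax^{'}_k + Bu_k + w = [A,\,B]\,[x^{'}_k,u_k]^T + w$. Second, since $[A,\,B]\in\mathcal{M}_{AB}$ by Lemma~\ref{lemma:noise_zonotope}, the definition of the product between a matrix set and a vector gives $[A,\,B]\,[x^{'}_k,u_k]^T \in \mathcal{M}_{AB}[x^{'}_k,u_k]^T$; hence $\xi \in \mathcal{M}_{AB}[x^{'}_k,u_k]^T \oplus \mathcal{W} = \hat{\mathcal{R}}^+_k$ in view of~\eqref{eq:over_approximation_of_one_step_evolution}. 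Since $\xi$ was arbitrary, $\mathcal{R}^+_k\subseteq\hat{\mathcal{R}}^+_k$, which is the claim.

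There is no serious obstacle here: the entire content of the lemma has been front-loaded into Lemma~\ref{lemma:noise_zonotope}, so the proof reduces to a short containment chain once $[A,\,B]\in\mathcal{M}_{AB}$ is invoked. The only points requiring minor care are (i) the stacking convention, so that $[A,\,B][x^{'}_k,u_k]^T = Ax^{'}_k + Bu_k$ is dimensionally consistent, and (ii) observing --- for the detector implementation in Section~\ref{sec:anomaly_detector} --- that $\hat{\mathcal{R}}^+_k$ is computable in closed form as a zonotope: the image $\mathcal{M}_{AB}[x^{'}_k,u_k]^T$ is the zonotope with center $C_{AB}[x^{'}_k,u_k]^T$ and generators $\{G^{(i)}_{M_{AB}}[x^{'}_k,u_k]^T\}_{i=1}^{T}$, and its Minkowski sum with the zonotope $\mathcal{W}=\mathcal{Z}_w(c_w,G_w)$ is again a zonotope obtained by concatenating the two generator sets. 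I would also remark that the inclusion is in general strict, since $\mathcal{M}_{AB}$ typically contains pairs other than the true $[A,\,B]$; consequently the anomaly detector built on $\hat{\mathcal{R}}^+_k$ is sound (it raises no false alarms in the attack-free, admissible-disturbance case) but not necessarily tight.
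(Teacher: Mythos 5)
Your proof is correct and follows essentially the same route the paper relies on: the paper omits an explicit proof, justifying the lemma by citation to \cite[Theorem 1]{alanwar2021data}, and the cited argument is precisely your one-step containment chain based on $[A,B]\in\mathcal{M}_{AB}$ from Lemma~\ref{lemma:noise_zonotope}. Your added remarks on the zonotopic closed form and the absence of false positives are consistent with how the detector \eqref{eq:anomaly_detector_logic} is used in Section~\ref{sec:anomaly_detector}.
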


Given, the result of Lemma~\ref{lemma:over_approx_forward_sets}, the proposed robust data-driven anomaly detector is
\begin{equation} \label{eq:anomaly_detector_logic}
    D_k =  \begin{cases}
       \text{anomaly} & \text{ if } x^{'}_{k+1} \not\in \hat{\mathcal{R}}^+_k \\
       \text{normal} & \text{ otherwise}
    \end{cases}
\end{equation}

\begin{remark}
The detector logic \eqref{eq:anomaly_detector_logic} is robust against any disturbance realization, and it ensures the absence of false positives. Such a feature is important to avoid unnecessary activation of the emergency controller with consequent tracking performance losses. \hfill $\Box$
\end{remark}

In what follows, we assume that when the anomaly detector triggers an anomaly, a flag=1 is transmitted.

\subsection{Safety verification module} \label{sec:safety_verification}

The objective of this module is to prevent the plant from reaching unsafe configurations, i.e., a configuration where  $x_k \notin \mathcal{X}_{\eta}\subseteq \mathcal{X}$ or $ u'_k \notin \mathcal{U}$.  In particular, given the received $u_k',$ the safety modules checks the following anomalies:
\begin{enumerate}
    \item \textbf{If} flag ==1 \textbf{then} \eqref{eq:anomaly_detector_logic} detected an attack.
    \item \textbf{If} $u_k'\notin \mathcal{U}$ or $\hat{\mathcal{S}}^{+}_k \not \subseteq \mathcal{X}_{\eta},$ with \begin{equation}
\hat{\mathcal{S}}^{+}_k = \mathcal{M}_{AB}[x_k, u_k']^T \oplus \mathcal{W}
\end{equation}
\textbf{then} the received control input has been corrupted, and it might cause constraints violations.
\end{enumerate}

The condition $\hat{\mathcal{S}}^{+}_k \not \subseteq \mathcal{X}_{\eta}$ ensures that any attack will be detected at least one step before it could compromise the plant's safety. Indeed,  if  $\hat{\mathcal{S}}^{+}_k \not \subseteq \mathcal{X}_{\eta},$ there is a possibility that $x_{k+1}$ caused by $u_k'$ does not fulfill the constraints. 
Moreover, in both cases above, the control input $u_k'$ received by the networked controller cannot be applied.

\subsection{Emergency controller module} \label{sec:emergency_controller}

The objective of the emergency controller is to stabilize the plant around a pre-established equilibrium point whenever the control input received by the networked tracking controller cannot be trusted for safety reasons. 
By assuming $(0_n,0_m)$ as the equilibrium pair, 
the emergency controller 
\begin{equation}
    u_k^{e}=f_e(x_k),\quad f_e: \mathcal{X}_e \in  \rr^n \rightarrow \mathcal{U}_e \in \rr^m 
\end{equation}
must fulfill the following objectives: (i) {\it $\mathcal{X}_e\supseteq \mathcal{X}_{\eta}$ and $\mathcal{U}_e\subseteq \mathcal{U}$}; (ii) {\it there exists an RCI set $\hat{\mathcal{T}}_e^0 \subseteq \mathcal{X}_{\eta}$  where the state trajectory is uniformly ultimately bounded in a finite number of steps}. Condition (i) ensures that the emergency controller fulfills all the constraints and that it can be activated anytime and from any state reachable under the tracking controller, while condition (ii) guarantees that the networked controller can be safely re-activated at least when $x_k\in \hat{\mathcal{T}}_e^0.$  

To meet the objectives (i)-(ii), in what follows, we resort to a customization of the data-driven set-theoretic MPC controller proposed in \cite{attar2023data}. The controller's offline design steps can be summarized as follows:\

    1) Design a stabilizing data-driven terminal controller, namely $u_k^{e_0}=f_e^0(x_k),$ fulfilling the prescribed constraints and such that the associated RCI set $\hat{\mathcal{T}}_e^0$ is a subset of $\mathcal{X}_{\eta},$ i.e., $\hat{\mathcal{T}}_e^0\subseteq \mathcal{X}_{\eta}.$ 
    
    2) If $\hat{\mathcal{T}}_e^0 \subseteq \mathcal{X}_{\eta},$ recursively compute a family $\{\hat{\mathcal{T}}^j_e\}^N_{j=1}$ of $N>0$ sets,
     with  $\hat{\mathcal{T}}^j_e$ an inner approximation of the model-based ROSC set defined in \eqref{eq:ROSC-set} and $\bigcup_{j=0}^N\hat{\mathcal{T}}_e^j \supseteq \mathcal{X}_{\eta}.$

The RCI set $\hat{\mathcal{T}}_e^0$ can be computed using the solution in \cite{chen2021data, RCI_LPV2023} or \cite[Remark~5]{attar2023data}, while  $\{\hat{\mathcal{T}}^j_e\}^N_{j=1},$ can be computed using the augmented description proposed in  \cite[Sec.~III.C]{attar2023data}. In particular,
\begin{equation}\label{eq:inner_ROSC_data_driven_augm}
		\begin{array}{c}
  \hat{\mathcal{T}}^{j}_e= Proj_x(\hat{\Xi}^j_e)=\left\{\!x\! \in\! \rr^n\!:\! H_{\hat{\mathcal{T}}^{j}_e} x \leq h_{\hat{\mathcal{T}}^{j}_e}  \!\right\}, \\
        \hat{\Xi}^{j}_e \!=\text{In}_z\left\{\hat{\Xi}^{j}_{AB}\right\},
		\\
		\!\hat{\Xi}^{j}_{AB} \!= \!\!\!\!\!\! \displaystyle  
		\bigcap_{[\hat{A}_i,\hat{B}_i]\in \mathcal{V}_{AB}}\!\!\!\!\!\!\!\!\! 
		\left\{z=[x^T,u^T]^T\! \in \rr^{n+m}\!:\!H_{z}^iz\leq h_{z}^i\right\} 
	\end{array}
\end{equation}
where $\hat{\Xi}^{j}_{AB}$ is the $(x,u)-$ augmented space description of  the ROSC set, $\text{In}_z(\cdot)$ is an operator which computes a zonotopic inner approximation of a polytope (e.g., using \cite[Sec.~IV.A.2]{yang2021scalable}),  $Proj_x(\hat{\Xi}^j_e)$ performs a projection operation of  $\hat{\Xi}^j_e$ into the $x-$domain, $\mathcal{V}_{AB}$ is the set of vertices ${\hat{A}_i,\,\hat{B}_i}\in \mathcal{M}_{AB},$ and 
\begin{equation}\label{eq:H-rep_extended}
	H_z^i= \begin{bmatrix}
		H_x & 0 \\
		H_{\hat{\mathcal{T}}^{j-1}_e}\hat{A}_i & H_{\hat{\mathcal{T}}^{j-1}_e}\hat{B}_i\\
		0 & H_u
	\end{bmatrix}, 
	\quad
	h_z^i=\begin{bmatrix}
		h_x\\
		\Tilde{h}_{\hat{\mathcal{T}}^{j-1}_e}\\
		h_u
	\end{bmatrix} 
\end{equation}
with 
\begin{equation}\label{eq:compute_tilde_h}
	[\tilde{h}_{\hat{\mathcal{T}}^{j-1}_e}]_r = \min_{w\in \mathcal{W}}\left\{[h_{\hat{\mathcal{T}}^{j-1}_e}]_r - [H_{{\hat{\mathcal{T}}^{j-1}_e}}]_r w \right\}  
\end{equation}
and $[h_{\hat{\mathcal{T}}^{j-1}_e}]_r,$ $[H_{{\hat{\mathcal{T}}^{j-1}_e}}]_r$ the $r-th$ rows of $h_{\hat{\mathcal{T}}^{j-1}_e}$ and $H_{{\hat{\mathcal{T}}^{j-1}}_{e}}.$
\remark \label{remark:zonotopic_approx_rosc}
The inner zonotopic approximation of $\hat{\Xi}^{j}_{AB}$ is needed to efficiently compute the projection $\hat{\mathcal{T}}_e^j$ from $\hat{\Xi}_{AB}^j,$ see \cite[Remark~4]{attar2023data} for a detailed discussion. \hfill $\Box$

Given $\{\hat{\mathcal{T}}^j_e\}^N_{j=0},$ and a convex cost function $J(x_k,u),$ the online operations of the Emergency Data-driven Set-Theoretic Controller (E-DSTC) are collected in Algorithm~\ref{algorithm:set_theoretic_MPC}.

\begin{algorithm}[h!] 
\textbf{Input:}
RCI set ${\mathcal{T}}^0_e$, and ROSC sets $\{\hat{\Xi}^{j}_e$, $\hat{\mathcal{T}}^{j}_e\}_{i=1}^N$ \;

\noindent
	\xrfill[0.7ex]{1pt} ($\forall\,k$)\xrfill[0.7ex]{1pt} 
\begin{algorithmic}[1]
 \State Find set membership index $j_k := \displaystyle \!\!\!\!\!\! \min_{j\in \{0,\ldots,N\}} \{\!j\!: \! x_k \in \! \hat{\mathcal{T}}^j_e \!\}$\;
\If{$j_k=0$}{ $u^{e}_k = f_e^0(x_k)$} \textbf{else}
\begin{equation}\label{eq:control_based_extended_control_regions}
	\begin{array}{c}
\displaystyle		u^{e}_k = \arg\min_u J(x_k, u) \quad s.t. 
	\left[x_k^T,u^T\right]^T \in \hat{\Xi}^j_e
	\end{array}
\end{equation}
\EndIf
\end{algorithmic}
\caption{Emergency Controller (E-DSTC)}
\label{algorithm:set_theoretic_MPC}
\end{algorithm}
\begin{property}
For any $x_0\in  \bigcup_{j=0}^N\hat{\mathcal{T}}^j_e,$ the emergency controller E-DSTC ensures that the state trajectory is uniformly ultimately bounded in $\hat{\mathcal{T}}^0_e$ in at most $N-$steps, i.e., $x_{k}\in \hat{\mathcal{T}}^0_e,\,\forall k\geq N$ \cite{attar2023data}.    
\end{property}

\subsection{Switching control logic}\label{sec:switching_control}

According to the anomaly test performed by the safety verification module  (see subsection~\ref{sec:safety_verification}), the control input applied to the plant, namely $u_k^p,$ is either $u_k'$ (provided by the networked controller) or $u_k^e$ (provided by the emergency controller). In this regard, Algorithm~\ref{algorithm:switching_logic} describes the policy used to switch from $u_k'$ to $u_k^e$ and vice-versa.

\begin{algorithm}[h!] 
\textbf{Initialize:} emergency = false 

\noindent
\xrfill[0.7ex]{1pt} ($\forall\,k$)\xrfill[0.7ex]{1pt}
\begin{algorithmic}[1]
\If{(emergency==true $\land$ $x_k \! \in \! \hat{\mathcal{T}}^0_e$)}{ emergency=false, ignore=1} \label{step:stay_in_emergency} \textbf{else} ignore=0
\EndIf
\If{$(u_k'\notin \mathcal{U})$  $\lor$
$\left(\hat{\mathcal{S}}^{+}_k\not \subseteq \mathcal{X}_{\eta} \right)$ $\lor$ (flag==1 $\land$  ignore==0)}{ emergency = true} \label{step:safety_verification_module}
\EndIf
\If{emergency == true}{ $u^{p}_k = u^{e}_k$} \label{step:emergency_activated} \textbf{else} $u^{p}_k = u'_k$
\EndIf
\end{algorithmic}
\caption{Switching control policy}
\label{algorithm:switching_logic}
\end{algorithm}

\begin{proposition} \label{proposition:complete_architecture}
\it 
The E-DSTC emergency controller and the switching control policy (Algorithm~\ref{algorithm:switching_logic}) allow the control architecture in Fig.~\ref{fig:proposed_control_architecture} to preserve the safety of the plant under any cyber-attacks and performance recovery in the post-attack phase.
\end{proposition}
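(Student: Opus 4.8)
The plan is to establish the two claims of Proposition~\ref{proposition:complete_architecture} — safety preservation under arbitrary FDI attacks and tracking recovery in the post-attack phase — by a case analysis driven by the switching logic in Algorithm~\ref{algorithm:switching_logic}, leaning on the no-false-positive property of the detector \eqref{eq:anomaly_detector_logic}, the outer-approximation guarantee of Lemma~\ref{lemma:over_approx_forward_sets}, and the finite-time ultimate boundedness Property~1 of the E-DSTC. First I would argue safety. The key observation is that $\hat{\mathcal{S}}^+_k$ computed in the safety verification module satisfies $\mathcal{R}^+_k=Ax_k\oplus Bu_k'\oplus\mathcal{W}\subseteq\hat{\mathcal{S}}^+_k$ (same argument as Lemma~\ref{lemma:over_approx_forward_sets} with $x_k$ in place of $x_k'$, using $[A,B]\in\mathcal{M}_{AB}$), so whenever $u_k'$ is actually applied to the plant we have checked both $u_k'\in\mathcal{U}$ and $\hat{\mathcal{S}}^+_k\subseteq\mathcal{X}_\eta$, hence the true next state $x_{k+1}\in\mathcal{R}^+_k\subseteq\mathcal{X}_\eta\subseteq\mathcal{X}$ for every admissible disturbance. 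Thus applying the networked input never violates \eqref{eq:constraints} and, crucially, leaves the plant in $\mathcal{X}_\eta\subseteq\mathcal{X}_e$, which is exactly the region from which the emergency controller can be activated (objective (i) of Section~\ref{sec:emergency_controller}).

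Next I would handle the steps in which the emergency controller is active. By induction on $k$ I would maintain the invariant $x_k\in\mathcal{X}_e$: it holds at $k=0$ since $x_0\in\mathcal{X}_\eta\subseteq\mathcal{X}_e$ (Assumption~\ref{assumption:control_center}), and it is preserved because (a) if $u_k'$ is applied, the previous paragraph gives $x_{k+1}\in\mathcal{X}_\eta\subseteq\mathcal{X}_e$, and (b) if $u_k^e$ is applied, then $x_k\in\mathcal{X}_e=\bigcup_{j=0}^N\hat{\mathcal{T}}^j_e$ by the invariant together with the fact that the emergency mode, once entered, only ends at step~\ref{step:stay_in_emergency} when $x_k\in\hat{\mathcal{T}}^0_e\subseteq\mathcal{X}_e$, and then the E-DSTC construction — $\hat{\mathcal{T}}^j_e$ being an inner approximation of the ROSC set to $\hat{\mathcal{T}}^{j-1}_e$ robustly in $w$, via \eqref{eq:inner_ROSC_data_driven_augm}--\eqref{eq:compute_tilde_h} and the vertex enumeration over $\mathcal{V}_{AB}\supseteq\{[A,B]\}$ — guarantees $x_{k+1}\in\hat{\mathcal{T}}^{j_k-1}_e\subseteq\mathcal{X}_e$ with $u_k^e\in\mathcal{U}_e\subseteq\mathcal{U}$. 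Hence all constraints hold for all $k\ge0$, proving safety.

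For the post-attack / recovery claim I would argue as follows. Assume the attack ceases at some time $k_a$, i.e., for $k\ge k_a$ we have $u_k'=u_k$ and $x_k'=x_k$, so flag$=0$ for $k\ge k_a$ (no false positives, Remark~3). Two situations arise. If the system is not in emergency mode at $k_a$, then from $k_a$ on the tests in step~\ref{step:safety_verification_module} either never trigger — in which case the networked controller runs unobstructed and, by Assumption~\ref{assumption:control_center}, tracking performance is as nominal — or, should a residual check still trip once (e.g.\ the conservative $\hat{\mathcal{S}}^+_k\not\subseteq\mathcal{X}_\eta$ test firing near the boundary), the argument reduces to the emergency case below. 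If the system is in emergency mode, Property~1 applied to the E-DSTC from the current state $x_k\in\bigcup_{j=0}^N\hat{\mathcal{T}}^j_e$ gives $x_{k}\in\hat{\mathcal{T}}^0_e$ within at most $N$ steps; at that instant step~\ref{step:stay_in_emergency} sets emergency$=$false and ignore$=1$, and since $\hat{\mathcal{T}}^0_e\subseteq\mathcal{X}_\eta$ the networked controller is handed a state inside its DoA, so by Assumption~\ref{assumption:control_center} tracking proceeds thereafter. This establishes (O2).

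The main obstacle I anticipate is not any single inequality but making the recovery statement airtight against pathological switching — in particular ruling out an infinite alternation between emergency and networked mode, and justifying that the conservative safety test does not block re-activation forever in the attack-free regime. The cleanest fix is to invoke the \texttt{ignore} flag: it is designed precisely so that, immediately after an emergency episode terminates in $\hat{\mathcal{T}}^0_e$, a stale flag$=1$ cannot re-trigger emergency at that same step, and once flag$=0$ persists (post-attack) the only way to re-enter emergency is through the purely state-based tests, which cannot fire when the nominal controller keeps $x_k$ well inside $\mathcal{X}_\eta$; I would state this as the regularity condition under which "performance recovery" is meant, rather than attempt to prove absence of chattering in full generality.
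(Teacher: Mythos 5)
Your proof is correct and follows essentially the same route as the paper's: the worst-case one-step check $u_k'\in\mathcal{U}$, $\hat{\mathcal{S}}^+_k\subseteq\mathcal{X}_\eta$ (valid because $[A,B]\in\mathcal{M}_{AB}$) guarantees safety of any applied networked input, the containment $\mathcal{X}_\eta\subseteq\bigcup_{j=0}^N\hat{\mathcal{T}}^j_e$ plus Property~1 handle the emergency phase, and the one-step \texttt{ignore} flag justifies the switch back in $\hat{\mathcal{T}}^0_e$. Your version is somewhat more explicit (the induction invariant and the $\mathcal{R}^+_k\subseteq\hat{\mathcal{S}}^+_k$ step), and the chattering/conservatism caveat you raise is a point the paper's proof also leaves informal, so it is a fair qualification rather than a flaw.
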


\begin{proof}
In the worst scenario, the anomaly detector \eqref{eq:anomaly_detector_logic} can either not detect the attack, or the anomaly flag=1 is reverted to flag=0 by an FDI attack on the actuation channel. In both cases, the control input $u'_k$ received on the plant side is corrupted. However, the test performed by the safety verification module in Step~\ref{step:safety_verification_module} of Algorithm~\ref{algorithm:switching_logic} verifies if $u'_k$ is admissible for the given input constraint and if the associate worst-case one step evolution remains in the tracking controller's domain $\mathcal{X}_{\eta}.$ If a safety risk is detected (emergency=true), then the E-DSTC controller is activated (Step~\ref{step:emergency_activated}). Note that, by construction, $\mathcal{X}_{\eta} \subseteq \bigcup_{j=0}^N{\hat{\mathcal{T}_e^j}},$ and the switch $u_k'\rightarrow u_k^e$ can be performed fulfilling all the constraints $\forall\, x\in \mathcal{X}_{\eta}.$ Moreover, once E-DSTC is activated, then Step~\ref{step:stay_in_emergency} prescribes that such a controller remains active until the state trajectory is confined (in at most $N$ steps) into $\hat{\mathcal{T}}_0^j.$ Then, when $x_k\in \hat{\mathcal{T}}_0^j$, two cases can arise: an attack is still ongoing, or there is a post-attack phase. In both cases, the proposed policy will attempt the switch $u_k^e\rightarrow u_k'$ under the pre-requisite that no safety risk could arise by applying $u_k'.$ Differently from the switch from $u_k'\rightarrow u_k^e$, Step~\ref{step:safety_verification_module} is now instructed to ignore for only one iteration the anomaly detected possible raised by \eqref{eq:anomaly_detector_logic}. This is instrumental to taking into consideration the fact that the current state $x_k$ was obtained using $u_{k-1}^e$ and not $u_{k-1}$ as expected by the \eqref{eq:anomaly_detector_logic}. If an attack is still ongoing, then the emergency controller is kept active; otherwise, since $x_k\in \mathcal{X}_{\eta},$ the tracking controller can be safely reactivated, ensuring tracking performance recovery in the post-attack phase.
\end{proof}
%
\section{Simulation Results} \label{sec:simulation}
This section shows the effectiveness of the proposed safety-preserving architecture through numerical results obtained considering the two-tank system described in \cite{gheitasi2022worst}. 
The system consists of two tanks, which water levels $h_i,\,i=1,2$ are the state variables, i.e., $x_k=[h_{1_k}, h_{2_k}]^T\in \rr^2.$ The input vector is $u = \left[u_p, u_l, u_u\right]^T \in \rr^3,$ where $u_p$ regulates the valve injecting water within the first tank, while $u_l$ and $u_u$ control the lower and upper valves between the two tanks. We assume that the system model is unknown but working around the equilibrium point $x_{eq}=\left[0.5, 0.5\right]^T$, $u_{eq}=\left[0.938, 1, 0.833\right]^T,$ and that data are collected with a sampling time $T_s = 1 \sec.$ By simulating the system around the considered equilibrium and applying random input perturbations, we have collected two input-state trajectories \eqref{eq:available_trajectories}
which verifies the rank condition \eqref{eq:rank_condition}. From the collected data, the set of all system matrices $\mathcal{M}_{AB}$ has been computed as prescribed by Lemma~\ref{lemma:over_approx_forward_sets}.
Moreover, the constraint and disturbance sets around the equilibrium point are described by the constraints $-0.7778 \leq {u}_p \leq 0.6111, -1.25 \leq {u}_l \leq 0.75, -1.4765 \leq {u}_u \leq 0.5235$ and $-0.48 \leq h_1, h_2 \leq 0.3,$ $\mathcal{W}=\{w \in \rr^2: |w_j|\leq 0.001, j=1,2\}$. 

In the performed experiments, the networked tracking controller has been designed using the data-driven LQR design procedure proposed in \cite{de2019formulas}, where a saturation module has been added to ensure constraint fulfillment. On the other hand, the E-DSTC controller has been designed using $40$ ROSC sets $\{\hat{\mathcal{T}}_e^j\}^{40}_{j=1}.$ Moreover, $x_0=[0.01, -0.01]^T$ and $r_k=[0.1,\, 0.03]^T$ for $k\in [0,\,100)$ and $r_k=[0.1,\,-0.1]^T$ for $k\in [100,\,200].$
%
\subsection{Attack on the actuation channel} \label{sec:attack_on_acutuation}

In the first scenario, the attacker performs an intelligent FDI attack on the actuation channel to affect the plant tracking performance while avoiding triggering the safety checks performed by the safety verification module. In particular, for $k\in [95,\, 113],$ the attacker injects $u^a_k$ where $\displaystyle u^a_k=\arg\min_{u^a} \|u^a-[1,\,1,\,2]^T\|_2^2,\,\text{ s.t. } u^a+u_k\in \mathcal{U}$ and set flag=0.
The obtained results are shown in Figs.~\ref{fig:case_a_state_evolution} and \ref{fig:case_a_alarm_safety}. When the attack starts at $k=95,$ the state trajectory deviates from the desired reference (see the red line in Fig.~\ref{fig:case_a_state_evolution}). At $k=96,$ the anomaly detector \eqref{eq:anomaly_detector_logic} detects an anomaly (see Fig.~\ref{fig:case_a_alarm_safety}) because the received measurement $x_{96}'\notin \mathcal{R}^+_{95}.$ However, since the attacker corrupts the anomaly flag, then an emergency is not raised by the safety checks until $k=104$ when the plant is one-step away from possibly violating the constraints, i.e., $\hat{\mathcal{S}}^+_{104} \not \subseteq \hat{\mathcal{T}}^{40}_e$ (see small box in Fig.~\ref{fig:case_a_state_evolution}). Given the safety risk, at $k=104,$ the emergency controller is activated to steer the state of the system into the safety region $\hat{\mathcal{T}}^0_e$ without constraint violation (see the green line in Fig.~\ref{fig:case_a_state_evolution}). 
When at $k=115$ the state of the state trajectory reaches $\hat{\mathcal{T}}^0_e$, the networked tracking controller can be safely restored since the attack is terminated. Consequently, the plant can recover and track the desired $r_k$ (see the magenta line in Fig.~\ref{fig:case_a_state_evolution}).  
\begin{figure}[h!]
    \centering
    \includegraphics[width=1\linewidth]{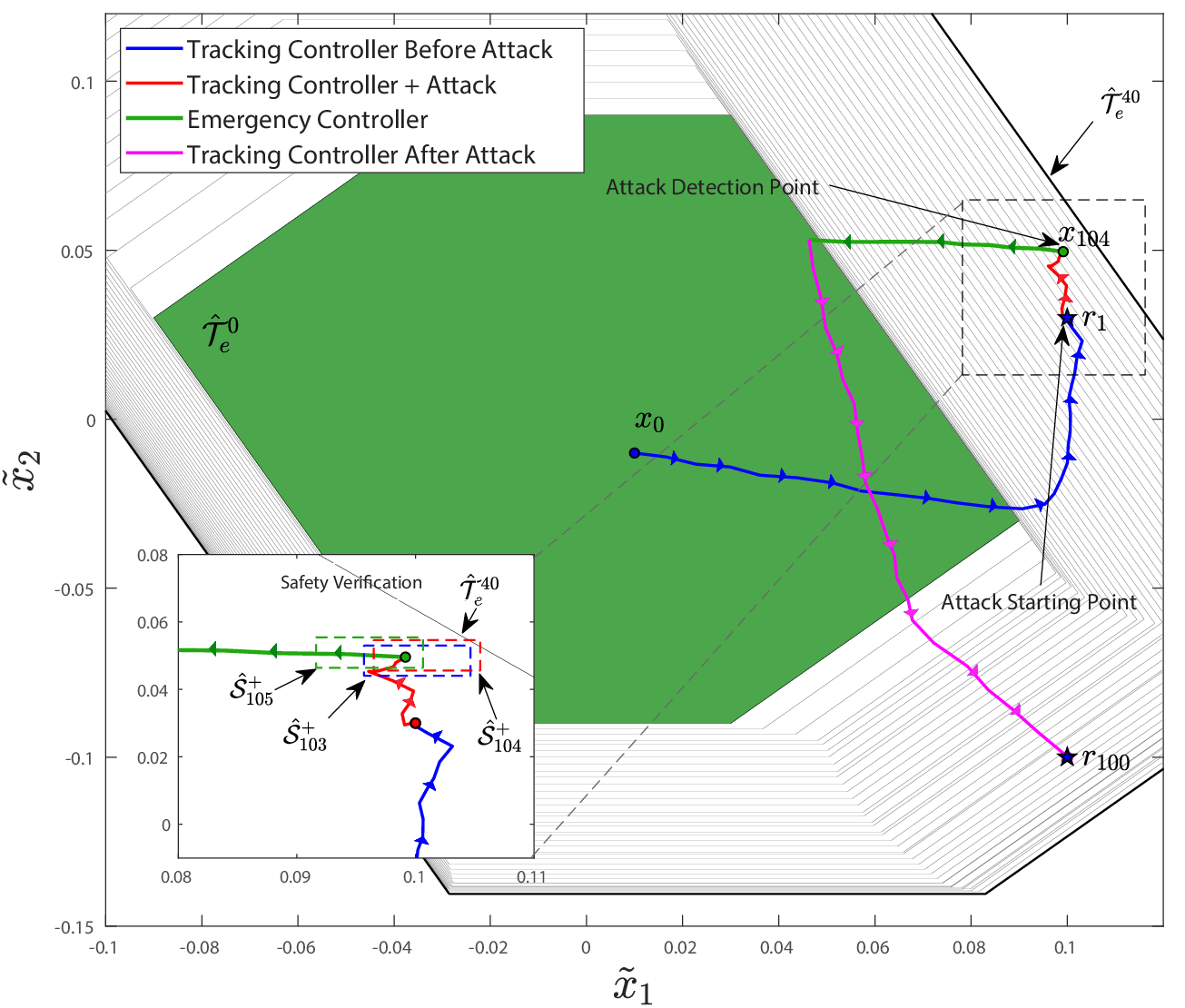}
    \caption{Case A: State Evolution and Emergency Controller Operation}
    \label{fig:case_a_state_evolution}
\end{figure}
\begin{figure}[h!]
    \centering
    \includegraphics[width=1\linewidth]{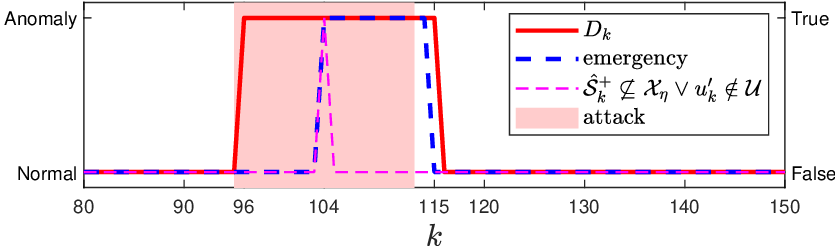}
    \caption{Case A: Detector and Safety Verification Outputs}
    \label{fig:case_a_alarm_safety}
\end{figure}
%
\subsection{Attack on the measurement channel} \label{sec:attack_on_measurement}

In the second scenario, the attacker performs an FDI on the measurement channel. In particular, for $k\in [95,\,113],$ it corrupts $x_k$ by adding on it the vector $[0.0025(k-94), 0]^T.$  The obtained results are shown in Figs~\ref{fig:case_b_state_evolution}-\ref{fig:case_b_alarm_safety}.
Since at the beginning of the attack period, the magnitude of the injection is minimal and comparable with the one of the process disturbance $\mathcal{W}$, the attack is undetected by \eqref{eq:anomaly_detector_logic} until $k=109$ (see Fig.~\ref{fig:case_b_alarm_safety}). 
At $k=110$, the attack is detected by \eqref{eq:anomaly_detector_logic} because  $x_{110}' \notin \hat{\mathcal{R}}^+_{109}$ (see small box in Fig.~\ref{fig:case_b_state_evolution}). Consequently, the anomaly detector transmits flag=1 to the plant, triggering the activation of the emergency controller (see green line in Fig.~\ref{fig:case_b_state_evolution}). At $k=114$, under the E-DSTC, the system's state reaches $\hat{\mathcal{T}}^0_e$ while the attack is terminated. Consequently, at $k=114,$ the anomaly still raised by \eqref{eq:anomaly_detector_logic} is ignored for one step (see Step~\eqref{step:safety_verification_module} of Algorithm.~\ref{algorithm:switching_logic}) and the tracking controller is reactivated. For $k\geq 114$, the state trajectory resumes tracking $r_k$ (see the magenta line in Fig.~\ref{fig:case_b_state_evolution}).
\begin{figure}[h!]
    \centering
    \includegraphics[width=1\linewidth]{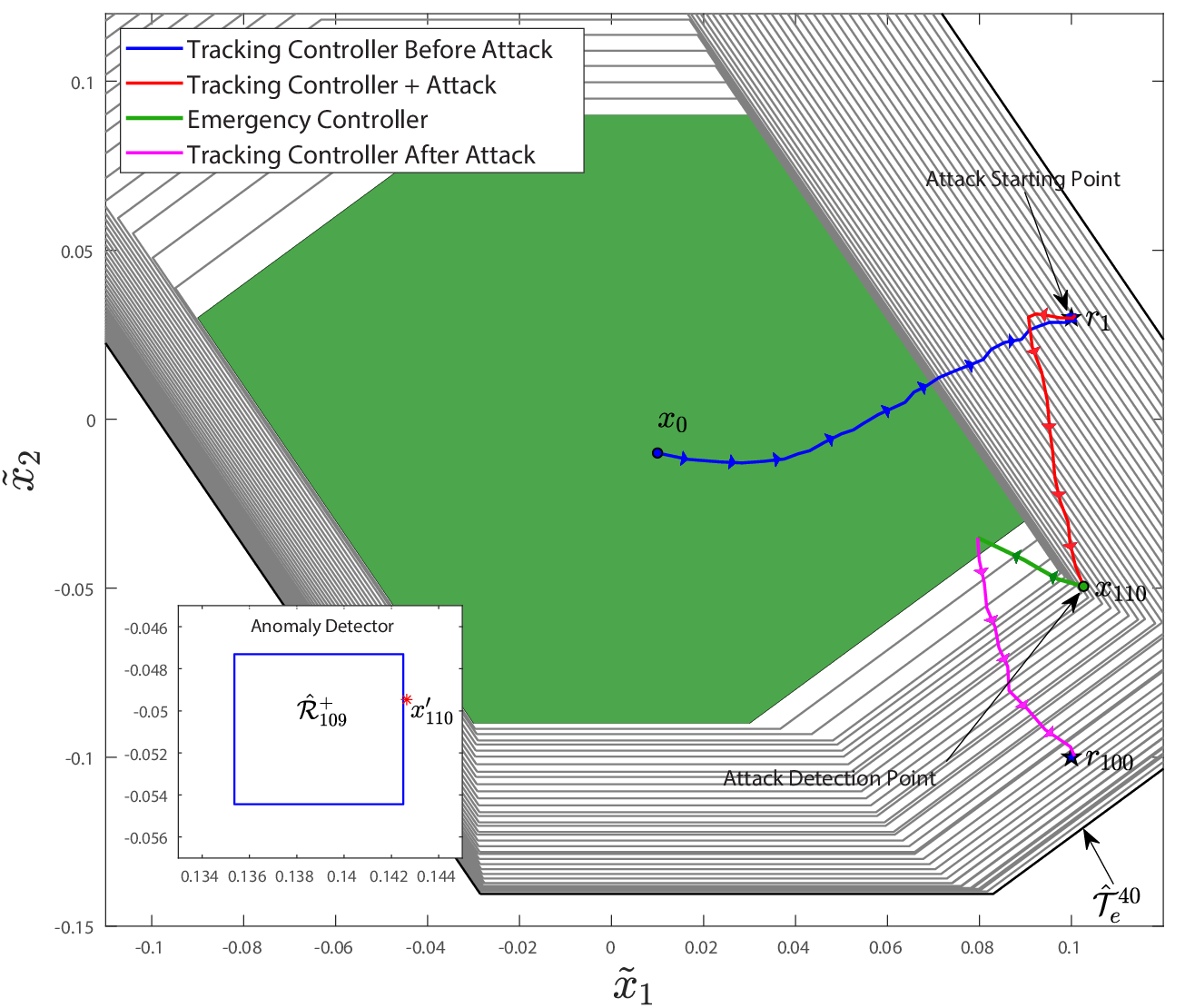}
    \caption{Case B: State Evolution and Detectors Operation}
    \label{fig:case_b_state_evolution}
\end{figure}
\begin{figure}[h!]
    \centering
    \includegraphics[width=1\linewidth]{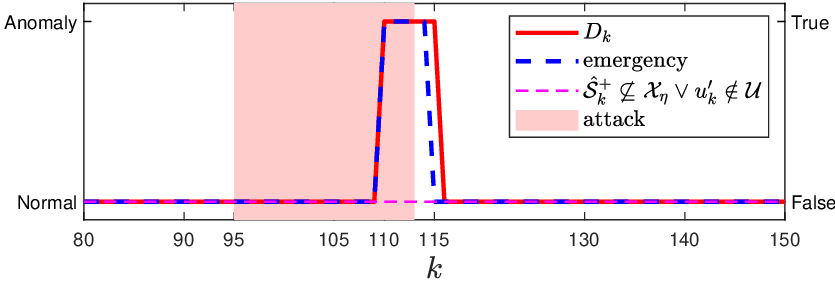}
    \caption{Case B: Detector and Safety Verification Outputs}
    \label{fig:case_b_alarm_safety}
\end{figure}
%
\section{Conclusions} \label{sec:conclusions}

In this paper, a data-driven safety-preserving control architecture for constrained CPS has been proposed. The proposed solution leveraged worst-case backward and forward reachability arguments to develop a data-driven robust anomaly detector on the networked controller and a safety-preserving module on the plant. Moreover, a safe switching mechanism between the networked and local emergency controllers has been defined to ensure safety during the attack and performance recovery thereafter. It has been proved that the proposed architecture ensures attack detection at least one-step before the plant's safety could be compromised. Simulation results have been shown to illustrate the capability of the proposed data-driven control architecture. 
\bibliographystyle{IEEEtran}
\bibliography{bibliography} 
\end{document}